\theoremstyle{definition}
\newtheorem{theorem}{Theorem}
\newtheorem{proposition}{Proposition}
\newtheorem{remark}{Remark}
\let\normalequation=\equation
\def\equation{\@ifnextchar[{\subequation}{\normalequation}}
\def\subequation[#1]#2{\@ifundefined{r@#1}%
  {\def\theequation{\bf ??#2}\@warning
    {Reference `#1' on page \thepage \space
     undefined}}{\edef\@tempa{\@nameuse{r@#1}}%
    \edef\theequation{\expandafter\@car\@tempa \@nil#2}}%
  \let\@currentlabel\theequation $$}
\begin{document}
\begin{center}
{\large\bf  A note on finite-time and fixed-time stability}
\footnote{This work is
jointly supported jointly
supported by  the National Natural
Sciences Foundation of China (Nos. 61273211 and 61273309), and the Program
for New Century Excellent Talents in University (NCET-13-0139).
}
\\[0.2in]
\begin{center}
Lu Wenlian \footnote{
Wenlian Lu is with the Centre for Computational
Systems Biology and School of Mathematical Sciences, Fudan University, People's Republic of China (wenlian@fudan.edu.cn)},
Liu Xiwei\footnote{Xiwei Liu is with Department of Computer Science and
Technology, Tongji University, and with the Key Laboratory of Embedded System and Service Computing,
Ministry of Education, Shanghai 200092, China. E-mail:
xwliu@tongji.edu.cn},
Tianping Chen\footnote{Tianping Chen is with the School of Computer
Sciences/Mathematics, Fudan University, 200433, Shanghai, China. \\
\indent ~~Corresponding author: Tianping Chen. Email:
tchen@fudan.edu.cn}
\end{center}
\end{center}

\begin{abstract}
In this letter,  by regarding finite-time stability as an inverse problem, we reveal the essence of finite-time stability and fixed-time stability. Some necessary and sufficient conditions are given. As application, we give a new approach for finite-time and fixed-time synchronization and consensus. Many existing results can be derived by the general approach.
\end{abstract}

Key words: Consensus; Time-varying topology; Event-triggered algorithm


\pagestyle{plain}

\section{Introduction}\quad
In many practical situations, stability over a finite time interval is of interests rather than the classic Lyapunov asymptotic stability, since it is more phyiscally realizable than concerning infinite time. There are two categories of concepts of stability over finite time interval. One is finite-time stability that means that the system converges within a finite time interval for any initial values; the other is fixed-time stability that means that the time intervals of convergence have a uniform upper-bounds for all initial values within the definitive domain. The previous works on this topic include Dorato 1961, Roxin 1966, Haimo 1986, Bhat and Bernstein 1998, 2000, Hong, et. 2002,. In particular, Lu and Chen 2005 presented analysis of the finite-time convergence for Cohen-Grossberg neural networks with discontinuous activation function.

The finite-time and fixed-time stability/convergence have been successfully applied in many fields. More related to the present letter, synchronization and consensus in networked systems have been attracting increasing interests (Lu, W.L and Chen 2004, 2006). Many recent literature were concerned with proposing schemes to realize finite-time synchronization/consensus .See Cort'{e}s 2006; Shen and Xia 2008, 2009; Xiao et al 2009; Jiang and Wang 2009; Wang and Hong 2010; Jiang and Wang 2011; Su et al 2012; Polyakov 2012; Parsegov et al 2012, 2013; Zhao et al 2015; Liu et al 2014, 2015; Wang et al 2014; Zou and Tie 2014a,b; Polyakov et al 2015; Zou 2015; Meng et al 2015 for reference. The main techniques in these works depends on the candidate Lyapunov functions as well as its cnvergence. In this letter, we propose a simple, novel and general technique to re-visit the problem of finite-time and fixed-time stability by regarding as an implicit inverse function of time and apply to the synchronization and consensus in networked system.

To exploit the idea, suppose a nonnegative scalar function $V(t)$ satisfies
\begin{align}\label{basic}
\dot{V}(t)=-\mu(V(t)),
\end{align}
where functions $\mu(V(t))>0$, $V(t)>0$; $\mu(0)=0$.

Because $\dot{V}(t)> 0$, $V(t)$ is decreasing. Therefore, the trajectory can also be written as
\begin{align}\label{basic1}
\dot{t}(V)=-\mu^{-1}(V)
\end{align}
(See the  following figure)
\begin{figure}
\centering
\includegraphics[width=0.3\textwidth]{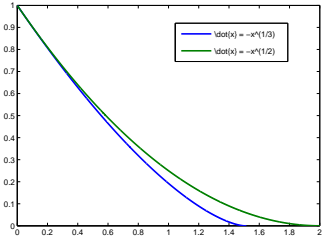}
\caption{Convergence behaviors of Eq.
\eqref{basic}}. \label{figSimu1}
\end{figure}
where $t(V)$ be the inverse function of $V(t)$. Then,
\begin{align}
t=-\int_{V(0)}^{V(t)}\mu^{-1}(V)dV
\end{align}
Therefore, the least time $t^{*}$ to make $V(t^{*}(V(0)))=0$ is
\begin{align*}
t^{*}(V(0))=\int_{0}^{V(0)}\frac{1}{\mu(V)}dV
\end{align*}

In summary, we have
\begin{proposition}
For system (\ref{basic}),
\begin{enumerate}
    \item ``$0$'' is a finite-time stable equilibrium for the system (\ref{basic}), i.e. there exists a time $t^{*}=t^{*}(V(0))$ depending on the initial value $V(0)$, such that $V(t)=0$, if $t\ge t^{*}(V(0))$, it is necessary and sufficient that the integral
\begin{align*}
t^{*}(V(0))=\int_{0}^{V(0)}\frac{1}{\mu(V)}dV
\end{align*}
is finite.
    \item ``$0$'' is a fixed-time stable equilibrium for the system (\ref{basic}), i.e. there exists a time $t^{*}$ independent of the initial value $V(0)$, such that $V(t)=0$, if $t\ge t^{*}$, it is necessary and sufficient that the integral
\begin{align*}
t^{*}=\int_{0}^{\infty}\frac{1}{\mu(V)}dV
\end{align*}
is finite.
\end{enumerate}
\end{proposition}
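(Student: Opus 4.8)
The plan is to reduce both statements to the analysis of a single monotone ``time-to-origin'' function obtained by separating variables in \eqref{basic}, and then to read off finite-time and fixed-time stability from the convergence of the relevant improper integral.

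\textbf{Setting up the inverse relation.} First I would observe that wherever $V(t)>0$ we have $\dot V(t)=-\mu(V(t))<0$, so $V$ is strictly decreasing and hence invertible along the trajectory; this legitimizes the separation $ds/\mu(s)=-dt$ that the excerpt uses to pass from \eqref{basic} to \eqref{basic1}. Integrating this identity from $0$ to $t$ gives, for every $t$ with $V(t)>0$,
\begin{align*}
t=\int_{V(t)}^{V(0)}\frac{ds}{\mu(s)}=G(V(0))-G(V(t)),\qquad G(r):=\int_{0}^{r}\frac{ds}{\mu(s)},
\end{align*}
where $G$ is well-defined (possibly $+\infty$) and strictly increasing on $(0,\infty)$ since the integrand $1/\mu$ is strictly positive there.

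\textbf{Finite-time part.} For sufficiency, suppose $G(V(0))<\infty$. Because $G$ is increasing with $G(r)\to 0$ as $r\to 0^{+}$, as $V(t)$ decreases toward $0$ the elapsed time $t=G(V(0))-G(V(t))$ increases toward the finite limit $G(V(0))$; hence the trajectory attains the level $0$ at the finite instant $t^{*}(V(0))=G(V(0))$, and since $\mu(0)=0$ makes $0$ an equilibrium it remains there afterward. For necessity, suppose $V(t^{*})=0$ with $t^{*}$ finite and minimal; for $t<t^{*}$ the identity $t=\int_{V(t)}^{V(0)} ds/\mu(s)$ holds, and letting $t\uparrow t^{*}$ forces $V(t)\downarrow 0$, so by monotone convergence the right-hand side increases to $\int_{0}^{V(0)} ds/\mu(s)=G(V(0))$; therefore $G(V(0))=t^{*}<\infty$. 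This yields the claimed equivalence with finiteness of $\int_{0}^{V(0)}1/\mu(s)\,ds$.

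\textbf{Fixed-time part and main obstacle.} Here I would exploit that the reaching time $t^{*}(V(0))=G(V(0))$ is a strictly increasing function of the initial value. A bound $t^{*}$ valid for \emph{all} $V(0)$ exists precisely when $\sup_{V(0)>0}G(V(0))<\infty$, and by monotone convergence this supremum equals $\lim_{r\to\infty}G(r)=\int_{0}^{\infty}1/\mu(s)\,ds$; so fixed-time stability holds iff this improper integral converges, with the uniform bound equal to its value. Note moreover that finiteness over $(0,\infty)$ forces finiteness over each $(0,V(0))$, so the fixed-time property entails the finite-time property for every initial condition, as it must. The delicate point throughout is the behavior at the boundary $V=0$, where $\mu$ need not be Lipschitz---indeed non-Lipschitz growth of $1/\mu$ near $0$ is exactly what separates genuine finite-time attainment of the origin from mere asymptotic decay. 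The crux is to justify that the separated identity extends up to the limit $V(t)\to 0^{+}$ and that ``reaching $0$ in finite time'' is the correct interpretation of the solution despite possible non-uniqueness of \eqref{basic} at the equilibrium; this is settled precisely by the observation that $G(r)\to 0$ as $r\to 0^{+}$ if and only if the lower-limit integral converges, which is what couples convergence of the integral to actual attainment of the origin.
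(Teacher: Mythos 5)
Your proposal is correct and takes essentially the same route as the paper: the paper likewise treats the problem as an inverse problem, passing from $V(t)$ to $t(V)$ via $dt/dV=-1/\mu(V)$ and identifying the settling time with $\int_{0}^{V(0)}dV/\mu(V)$ (respectively $\int_{0}^{\infty}dV/\mu(V)$ in the fixed-time case). Your treatment via the monotone function $G$ and monotone convergence merely makes rigorous the steps the paper leaves implicit.
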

\begin{remark}
It is clear that finite-time convergence is an inverse problem: To find the time $\bar{t}$ so that $V(\bar{t})=0$. Therefore, instead of $V(t)$, we discuss the inverse function $t(V)$.
Previous results reveal that the finite-time convergence depends on the behavior of $\mu(V)$ in the neighborhood of $V=0$.
\end{remark}
\begin{remark}
Instead, the fixed-time convergence depends on the behavior of $\mu(V)$ at $V=0$ as well as the behavior of $\mu(V)$ at $\infty$.
\end{remark}


\begin{remark}
Geometrically, a system is stable is equivalent to that its trajectory $x(t)$ is with finite length in state space $x$. Instead, finite-time convergence means that the trajectory $x(t)$ is with finite length in time-state space $(x,t)$.
\end{remark}

In case $\mu(s)=s$, then
\begin{align*}
t=-\alpha^{-1}\int_{V(0)}^{V(t)}\frac{1}{V}dV
=-\alpha^{-1}\mathrm{log}\frac{V(t)}{V(0)},
\end{align*}
and
$$V(t)=V(0)e^{-\alpha t}.$$

It is clear that the integral
\begin{align*}
\int_{0}^{V(0)}\frac{1}{\mu(V)}dV=\infty
\end{align*}
Therefore, there is no $t^*$ such that $V(t^*)$=0. In fact, the system (\ref{basic}) is exponentially stable.

If
$\mu(s)=s^{p}$ with $p\ne 1$, then
\begin{align*}
t=\alpha^{-1}\int_{V(t)}^{V(0)}\frac{1}{V^{p}}dV
=\frac{V^{1-p}(0)-V^{1-p}(t)}{\alpha(1-p)}
\end{align*}
and
$$V(t)=[V^{1-p}(0)-\alpha(1-p)t
]^{\frac{1}{1-p}}$$

In case $p<1$, then $V(t)=0$, if
$$t\ge t^*=\alpha^{-1}\int_{0}^{V(0)}\frac{1}{V^{p}}dV=\frac{V^{1-p}(0)}{\alpha(1-p)}$$
which means that $V(t)$ converges to zero in finite-time.

On the other hand, in case $p>1$,
$$V(t)=\frac{1}{[\alpha(p-1)t
+V^{1-p}(0)]^{\frac{1}{p-1}}}$$
which means that $V(t)$ does not converges in finite time. Instead, it converges to zero with power rate $t^{-(p-1)}$ (see Chen. T. et.al, 2007).

\begin{figure}
\centering
\includegraphics[width=0.3\textwidth]{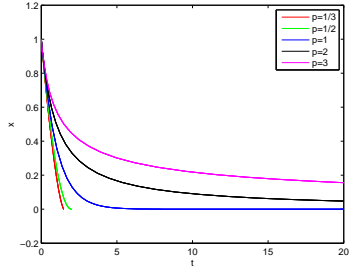}
\caption{Convergence behaviors for different index "p"}. \label{figSimu2}
\end{figure}

With similar approach, we have the following
\begin{theorem}
Suppose the Dini derivative of a nonnegative function $V(t)=V(z(t))$ satisfies
\begin{eqnarray}\label{time-triggered}
\dot{V}^{+}(t)\le\left\{\begin{array}{ccl}
-\mu_{1}(V(t))           & ;&\mathrm{if}~ 0<V<a\\
-\mu_{2}(V(t))      & ;&\mathrm{if}~ V\geq a
\end{array}\right.
\end{eqnarray}
for some constant $a>0$, where functions $\mu_{1}(V(t))>0$, $\mu_{2}(V(t))>0$, when $V(t)>0$; $\mu_{1}(0)=0$; and
\begin{align*}
\int_{0}^{a}\frac{1}{\mu_{1}(V)}dV=\omega_1<\infty,\\
\int_{a}^{\infty}\frac{1}{\mu_{2}(V)}dV=\omega_2<\infty
\end{align*}
for some constant $a>0$. Then $V(t)\equiv 0$ for all
$
t\ge \omega_1+\omega_2,
$
i.e., the fixed-time stability of ``$0$'' is realized.
\end{theorem}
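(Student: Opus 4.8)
The plan is to prove the statement by a comparison argument that treats the two regions $\{V\ge a\}$ and $\{0<V<a\}$ separately, mirroring the inverse-function reasoning behind Proposition 1 but adapted to the differential inequality. Throughout I would assume $V$ is continuous (as the use of the upper right Dini derivative $\dot V^{+}$ suggests) and nonnegative. The central tool is the elementary comparison fact that a continuous function $u$ with $\dot u^{+}(t)\le -1$ on an interval $[\alpha,\beta]$ satisfies $u(\beta)\le u(\alpha)-(\beta-\alpha)$; I would either invoke this standard Dini-derivative inequality or prove it in one line from the mean-value-type estimate $u(\beta)-u(\alpha)\le\int_{\alpha}^{\beta}\dot u^{+}$.

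First I would dispose of the high region. Assuming $V(0)\ge a$ (otherwise I set $t_{1}=0$ and skip to the next step), define $G_{2}(t)=\int_{a}^{V(t)}\frac{ds}{\mu_{2}(s)}$, which is finite and well defined as long as $V(t)\ge a$ precisely because $\int_{a}^{\infty}1/\mu_{2}=\omega_{2}<\infty$. Since $s\mapsto\int_{a}^{s}1/\mu_{2}$ is $C^{1}$ with strictly positive derivative, the chain rule for Dini derivatives (valid exactly because the outer function has positive derivative) gives $\dot G_{2}^{+}(t)=\frac{1}{\mu_{2}(V(t))}\,\dot V^{+}(t)\le \frac{1}{\mu_{2}(V(t))}\bigl(-\mu_{2}(V(t))\bigr)=-1$. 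By the comparison fact, $G_{2}$ decreases at rate at least one and therefore reaches $0$ — equivalently $V$ reaches the level $a$ — at some time $t_{1}\le G_{2}(0)=\int_{a}^{V(0)}1/\mu_{2}\le\omega_{2}$.

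Next I would run the same argument in the low region. For $t\ge t_{1}$ one first checks that $V$ stays below $a$: at $V=a$ one has $\dot V^{+}\le-\mu_{2}(a)<0$, and on $0<V<a$ one has $\dot V^{+}\le-\mu_{1}(V)<0$, so $V$ is strictly decreasing and cannot climb back above $a$. Define $G_{1}(t)=\int_{0}^{V(t)}\frac{ds}{\mu_{1}(s)}$, which is finite because $\omega_{1}<\infty$; the same chain-rule computation yields $\dot G_{1}^{+}(t)\le-1$ wherever $V(t)>0$. Since $G_{1}(t_{1})=\int_{0}^{V(t_{1})}1/\mu_{1}\le\omega_{1}$ and $G_{1}=0$ exactly when $V=0$, the comparison fact forces $V$ to hit $0$ at some $\tau\le t_{1}+\omega_{1}\le\omega_{1}+\omega_{2}$.

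Finally, I would establish forward invariance of the equilibrium, so that $V(t)\equiv 0$ for all $t\ge\tau$, and in particular for all $t\ge\omega_{1}+\omega_{2}$. This follows from continuity, nonnegativity, and the fact that $\dot V^{+}<0$ on $\{V>0\}$: were $V$ positive at some later instant, taking the last time before it at which $V=0$ and applying the monotonicity consequence of $\dot V^{+}<0$ on the intervening interval would contradict $V\ge 0$. I expect the main obstacle to be the rigorous Dini-derivative bookkeeping rather than the geometry: one must justify the chain rule $\dot G_{i}^{+}(t)=G_{i}'(V(t))\,\dot V^{+}(t)$ (which is clean only because $G_{i}'>0$), the passage from a pointwise Dini-derivative bound to the integral and hitting-time conclusion, and the delicate behavior of $G_{1}$ as $V\to 0^{+}$, where finiteness of the improper integral $\omega_{1}$ is exactly what makes the first phase terminate in finite time.
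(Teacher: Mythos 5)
Your proof is correct, and it is considerably more complete than the one the paper gives. The paper's entire proof is the two-line assertion that, in the inverse coordinates $t(V)$ set up before Proposition 1, the elapsed time satisfies $t(0)-t(V(0))\le \omega_1+\omega_2$ whether $V(0)$ lies below or above the threshold (which the paper's proof, apparently by typo, calls $1$ rather than $a$); it silently treats the differential inequality as the equality $\dot V=-\mu(V)$, assumes $V$ is strictly decreasing and invertible, and never addresses why $V$ stays at $0$ after first hitting $0$. Your route replaces inversion of $V$ by composition with the antiderivatives $g_i(v)=\int \frac{ds}{\mu_i(s)}$: the clocks $G_i(t)=g_i(V(t))$ obey $\dot G_i^{+}(t)\le -1$ via the Dini chain rule, and the standard comparison lemma then bounds the high phase by $\omega_2$ and the low phase by $\omega_1$. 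This is the same two-phase decomposition and the same integral bound the paper intends, but your implementation works directly with the Dini-derivative inequality (no invertibility or strict monotonicity of $V$ needed), and your final forward-invariance argument fills a genuine gap: the theorem asserts $V(t)\equiv 0$ for \emph{all} $t\ge \omega_1+\omega_2$, not merely that $V$ vanishes at some instant, and the paper's proof says nothing about this. The one hypothesis you should state explicitly is mild regularity of $\mu_1,\mu_2$ (continuity suffices), so that each $g_i$ is $C^1$ with positive derivative and the chain rule for Dini derivatives applies; the paper assumes this implicitly throughout.
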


\begin{proof}
In this case, no matter $V(0)\le 1$ or $V(0)\ge 1$, we can prove
\begin{align*}
t(0)-t(V(0))\le \omega_1+\omega_2.
\end{align*}
The proof is completed.
\end{proof}
\begin{remark}

System governed by (\ref{time-triggered}) is written as event-triggered system, which can also be written as following time-triggered model
\begin{eqnarray}
\dot{V}(t)\le\left\{\begin{array}{ccl}
-\mu_1(V(t)) & ;&\mathrm{if}~ t> \omega_2\\
-\mu_2(V(t))  & ;&\mathrm{if}~ t\leq \omega_2
\end{array}\right.
\end{eqnarray}
\end{remark}

\begin{remark}
The model discussed in Lu and Chen (2005) for Cohen-Grossberg neural networks with discontinuous activation functions can be  regarded as
$$\dot{V}(X(t))=\mu (V(X(t)),$$
where $\mu (V)$ is a discontinuous monotone-nondecresing function  with equilibrium $V^{*}$ lying in the discontinuity of the activation functions. In this case, finite-time convergence can be ensured (see Theorem 8 in Lu and Chen (2005)).
\end{remark}

\section{Applications: Finite-time and fixed-time synchronization and consensus}\label{syn}
In this section, we will apply the theoretical results given in previous section to finite-time and fixed-time synchronization and consensus, where the nodes are nonlinearly coupled and the network is a strongly connected undirected graph.


\subsection{Finite-time synchronization}
In (Lu and Chen 2004, 2006), and some other papers, the following linear coupled system
\begin{align}\label{modell}
\dot{x}_i(t)=&f(x_i(t))+\alpha \sum_{j=1}^Na_{ij}(x_j(t)-x_i(t)),
~~i=1,\cdots,N,
\end{align}
is discussed. By defining following useful reference node given in (Lu and Chen 2004, 2006):
\begin{align}
x^{\star}(t)=\frac{1}{N}\sum_{i=1}^Nx_i(t),
\end{align}
and following Lyapunov function (see Lu and Chen 2004, 2006)
\begin{align}\label{lya}
V(t)&=\frac{1}{2}\sum_{i=1}^N
(x_i(t)-x^{\star}(t))^T(x_i(t)-x^{\star}(t))\nonumber\\
&=\frac{1}{2N}\sum_{i,j=1}^N (x_i(t)-x_j(t))^T(x_i(t)-x_j(t))\nonumber\\
&=\frac{1}{2N}
\sum_{k=1}^n\sum_{i,j=1}^N|x_i^k(t)-x_j^k(t)|^{2},
\end{align}
it was proved that under some mild conditions, $\dot{V}(t)\le -\alpha V(t)$ for some constant $\alpha$. Therefore by previous result for $\mu(V)=V$, the convergence is exponential and not with finite-time.

To make the convergence finite-time, in this section, replacing (\ref{model1}), consider nonlinear coupled network with $N$ nodes:
\begin{align}\label{model}
\dot{x}_i(t)=&f(x_i(t))+\alpha \sum_{j=1}^Na_{ij}\Phi(x_j,x_i),
\end{align}
where scalars $\alpha>0,$ $x_i=(x_i^1,\cdots,x_i^n)^T\in R^n$, $i=1,\cdots,N$. Coupling matrix $A=(a_{ij})$ is symmetric and irreducible, with $a_{ij}\ge 0$, $i\ne j$.

Continuous function $f: R^n\rightarrow R^n$ satisfies: for any $U=[u^1,\cdots,u^n]^{T}\in R^n$, $V=[v^1,\cdots,v^n]^{T}\in R^n$, there exists a scalar $\delta>0$, such that
\begin{align}\label{quad}
(U-V)^T(f(U)-f(V))\le \delta (U-V)^T(U-V).
\end{align}

The nonlinear function $\Phi(\cdot,\cdot): R^n\times R^n\rightarrow R^n$ is defined as (see \cite{L09}):
\begin{align}
\Phi(U,V)=(\phi(u^1,v^1),\cdots,\phi(u^n,v^n))^T,
\end{align}
where $\phi(x,y)=sign(y-x)\frac{\mu(|x-y|^{2})}{|x-y|}$, and
the function $\mu(V)>0$ satisfies

1. semi-norm property $$\mu(V+U)\le c_{\mu}(\mu(U)+\mu(V))$$

2. $\frac{V}{\mu(V)}$ is monotone increasing when $V$ increasing.


Noticing $A$ is strongly connected and the semi-norm property for function $\mu$, we can find some constants $a_{\mu,A}$, $\bar{a}_{\mu,A}$ depending on the function $\mu$ and the coupling matrix $A$ such that
\begin{align}
&\sum_{k=1}^n\sum_{i,j=1}^Na_{ij}|x_i^k(t)-x_j^k(t)|
\phi(|x_i^k(t)-x_j^k(t)|)
\nonumber\\
&= \sum_{k=1}^n\sum_{i,j=1}^N a_{ij}\mu(|x_i^k(t)-x_j^k(t)|^{2})
\nonumber\\
&\ge a_{\mu,A} \sum_{k=1}^n\sum_{i,j=1}^N\mu(|x_i^k(t)-x_j^k(t)|^{2})
\nonumber\\
&\ge \bar{a}_{\mu,A} \mu(\sum_{k=1}^n\sum_{i,j=1}^N|x_i^k(t)-x_j^k(t)|^{2})
\end{align}

Based on previous preparations, we can give
\begin{theorem}\label{bx}
Suppose that irreducible matrix $A=(a_{ij})\in R^{N\times N}$ satisfies $a_{ij}=a_{ji}\ge 0$, $a_{ii}=0$. The coupling strength $\alpha$ is chosen such that
\begin{align}
\frac{\alpha \bar{a}_{\mu,A} }{2}>2\delta \frac{V(0)}{\mu(V(0))}.
\end{align}
Then, the  system (\ref{model}) reaches synchronization for all
\begin{align}
t\ge \bigg(\frac{\alpha \bar{a}_{\mu,A} }{2}-2\delta \frac{V(0)}{\mu(V(0))}\bigg)^{-1}\int_{0}^{V(0)}\frac{dV}{\mu(V)}
\end{align}
\end{theorem}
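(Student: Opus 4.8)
The plan is to differentiate the Lyapunov function $V$ of \eqref{lya} along the trajectories of \eqref{model}, bound the self-dynamics and coupling contributions separately, and reduce the result to the scalar inequality $\dot V\le -\beta\,\mu(V)$ so that the first part of Proposition~1 applies. First I would use the form $V=\tfrac12\sum_i (x_i-x^\star)^T(x_i-x^\star)$ and differentiate. Since $\sum_i (x_i-x^\star)=0$, the contribution of $\dot x^\star$ drops out, leaving $\dot V=\sum_i (x_i-x^\star)^T\dot x_i$; substituting \eqref{model} splits $\dot V$ into a self term $\sum_i (x_i-x^\star)^T f(x_i)$ and a coupling term $\alpha\sum_{i,j}a_{ij}(x_i-x^\star)^T\Phi(x_j,x_i)$.

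For the self term I would insert $f(x^\star)$ at no cost, using $\sum_i (x_i-x^\star)^T f(x^\star)=0$, and then apply \eqref{quad} to obtain $\sum_i (x_i-x^\star)^T f(x_i)=\sum_i (x_i-x^\star)^T\!\big(f(x_i)-f(x^\star)\big)\le \delta\sum_i (x_i-x^\star)^T(x_i-x^\star)=2\delta V$. For the coupling term I would symmetrize over the pair $(i,j)$ using $a_{ij}=a_{ji}$ together with the odd symmetry $\phi(x_i^k,x_j^k)=-\phi(x_j^k,x_i^k)$; each pair then collapses via $(x_i^k-x_j^k)\phi(x_j^k,x_i^k)=\mu(|x_i^k-x_j^k|^2)$, so that (with the sign convention making the coupling diffusive) the coupling term becomes $-\tfrac{\alpha}{2}\sum_k\sum_{i,j}a_{ij}\mu(|x_i^k-x_j^k|^2)$. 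The chain of inequalities established just before the theorem, which uses irreducibility together with the semi-norm property of $\mu$, bounds this by $-\tfrac{\alpha}{2}\bar a_{\mu,A}\,\mu\big(\sum_k\sum_{i,j}|x_i^k-x_j^k|^2\big)$. Because $\sum_k\sum_{i,j}|x_i^k-x_j^k|^2=2NV$ and $\mu$ is nondecreasing, this is at most $-\tfrac{\alpha\bar a_{\mu,A}}{2}\mu(V)$ once the $N$-dependent normalization is absorbed into $\bar a_{\mu,A}$. Collecting the two estimates gives $\dot V\le 2\delta V-\tfrac{\alpha\bar a_{\mu,A}}{2}\mu(V)$.

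The final reduction exploits property~2, that $V/\mu(V)$ is monotone increasing. The mechanism is: as long as $V(t)\le V(0)$ one has $\tfrac{V(t)}{\mu(V(t))}\le \tfrac{V(0)}{\mu(V(0))}$, hence $2\delta V(t)\le 2\delta\tfrac{V(0)}{\mu(V(0))}\mu(V(t))$, which yields $\dot V\le -\big(\tfrac{\alpha\bar a_{\mu,A}}{2}-2\delta\tfrac{V(0)}{\mu(V(0))}\big)\mu(V)=:-\beta\,\mu(V)$, strictly negative by the standing assumption $\beta>0$. Thus $V$ cannot climb back above $V(0)$, which closes the bound $V(t)\le V(0)$ self-consistently from the initial time. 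With $\dot V\le -\beta\,\mu(V)$ in force, applying the first part of Proposition~1 to the rescaled rate $\beta\mu$ gives $V(t)\equiv 0$ for $t\ge \beta^{-1}\int_0^{V(0)}\!dV/\mu(V)$, which is precisely the claimed synchronization time.

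I expect the main obstacle to be legitimizing the use of $V(t)\le V(0)$ inside the monotonicity bound: the inequality $\dot V\le -\beta\,\mu(V)$ is exactly what forces $V$ to decrease, yet its derivation already presupposes $V(t)\le V(0)$, so the argument must be phrased as an invariance statement seeded at $t=0$ (where the assumption gives $\dot V(0)<0$) and propagated by continuity. A secondary point requiring care is the bookkeeping of the factor $2N$ inside $\mu$ and the precise sign produced by the symmetrization of the coupling term, both of which must be folded cleanly into the constant $\bar a_{\mu,A}$ for the stated time bound to come out verbatim.
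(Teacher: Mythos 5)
Your proposal follows essentially the same route as the paper's own proof: the same decomposition $\dot V=V_1+V_2$, the QUAD bound $V_1\le 2\delta V$, the symmetrization of the coupling term via $a_{ij}=a_{ji}$ and the odd symmetry of $\phi$ into $-\frac{\alpha}{2}\sum_{k}\sum_{i,j}a_{ij}\mu(|x_i^k-x_j^k|^2)$ followed by the pre-theorem inequality with $\bar a_{\mu,A}$, and then the monotonicity of $V/\mu(V)$ to freeze the coefficient at its initial value $V(0)/\mu(V(0))$ before invoking Proposition 1. If anything, you are more careful than the paper on the two points it silently glosses over (the $2N$ normalization inside $\mu$ and the self-consistency of using $V(t)\le V(0)$), so the proposal is correct and matches the paper's argument.
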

\begin{proof}
Differentiating the Lyapunov function (\ref{lya}), we have
\begin{align}
\dot{V}(t)=&\sum_{i=1}^N(x_i(t)-x^{\star}(t))^T(x_i(t)
-x^{\star}(t))^{\prime}\nonumber\\
=&\sum_{i=1}^N(x_i(t)-x^{\star}(t))^T(f(x_i(t))-f(x^{\star}(t)))
\nonumber\\
&+\alpha\sum_{i=1}^N(x_i(t)-x^{\star}(t))^T
\sum_{j=1}^Na_{ij}\Phi(x_j(t),x_i(t))\nonumber\\
=&V_1(t)+V_2(t),
\end{align}
where
\begin{align}\label{v1}
V_1(t)\le \delta\sum_{i=1}^N(x_i(t)-x^{\star}(t))^T(x_i(t)-x^{\star}(t))
=2\delta V(t),
\end{align}

By some algebra, we have
\begin{align}\label{v2}
V_2(t)=&\alpha\sum_{i=1}^N(x_i(t)-x^{\star}(t))^T
\sum_{j=1}^Na_{ij}\Phi(x_j(t),x_i(t))\nonumber\\
=&\alpha\sum_{i,j=1}^Na_{ij}x_i(t)^T
\Phi(x_j(t),x_i(t))
\nonumber\\
= &\frac{\alpha}{2}\sum_{i,j=1}^Na_{ij}(x_i(t)-x_j(t))^T
\Phi(x_j(t),x_i(t))\nonumber\\
 =&-\frac{\alpha }{2}\sum_{k=1}^n\sum_{i,j=1}^Na_{ij}|x_i^k(t)-x_j^k(t)|
\phi(x_j^k(t),x_i^k(t))
\nonumber\\\le & -\frac{\alpha \bar{a}_{\mu,A}}{2} \mu(\sum_{k=1}^n\sum_{i,j=1}^N|x_i^k(t)-x_j^k(t)|^{2})
\end{align}

Therefore,
\begin{align*}
\dot{V}(t)\le& 2\delta V(t)-\frac{\alpha \bar{a}_{\mu,A} }{2}\mu(V(t))\nonumber\\
=& -\bigg(\frac{\alpha \bar{a}_{\mu,A} }{2}-2\delta \frac{V(t)}{\mu(V(t))}\bigg)\mu(V(t))
\nonumber\\
\le& -\bigg(\frac{\alpha \bar{a}_{\mu,A} }{2}-2\delta \frac{V(0)}{\mu(V(0))}\bigg)\mu(V(t))\end{align*}
By Theorem 1, the proof is completed.
\end{proof}

\subsection{Fixed-time synchronization}
In this part, we consider fixed-time synchronization.

Suppose $A=(a_{ij})$ and $B=(b_{ij})$ with $a_{ij}\ge 0$ and $b_{ij}\ge 0, i\ne j$ are two symmetric and irreducible matrices.

Consider following system
\begin{align}\label{model1}
\dot{x}_i(t)=\left\{
\begin{array}{ll}
f(x_i(t))+\alpha \sum_{j=1}^Na_{ij}\Phi(x_j,x_i)&t\ge
t^{*}\\
f(x_i(t))+\beta\sum_{j=1}^Nb_{ij}\Psi(x_j,x_i); &t\le t^{*}
\end{array}\right.
\end{align}
where $\psi(x,y)=sign(y-x)\frac{\nu(|x-y|^{2})}{|x-y|}$, the function $\nu(V)>0$ satisfies
$\nu(V+U)\le c_{\nu}(\nu(U)+\nu(V))$
and $\frac{V}{\nu(V)}$ is monotone decreasing when $V$ increasing.

$\bar{b}_{\nu,B}$ is a constant depending on the function $\nu$ and the coupling matrix $B$ such that
\begin{align}
&\sum_{k=1}^n\sum_{i,j=1}^Nb_{ij}|x_i^k(t)-x_j^k(t)|
\psi(|x_i^k(t)-x_j^k(t)|)
\nonumber\\
&\ge \bar{b}_{\nu,B} \nu(\sum_{k=1}^n\sum_{i,j=1}^N|x_i^k(t)-x_j^k(t)|^{2})
\end{align}

Then, we have the following fixed-time synchronization result.
\begin{theorem}\label{main}
System (\ref{model1})
can reaches synchronization for all
\begin{align}\label{settling2}
t\ge \bigg(\frac{\alpha \bar{a}_{\mu,A} }{2}-2\delta
\frac{1}{\mu(1)}\bigg)^{-1}\int_{0}^{1}\frac{dV}{\mu(V)} +\bigg(\frac{\beta
\bar{b}_{\nu,B}}{2}\bar{b}_{\nu,B}-2\delta \frac{1}{\nu(1)}
\bigg)^{-1}\int_{1}^{\infty} \frac{dV}{\nu(V)}
\end{align}
\end{theorem}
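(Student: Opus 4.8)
The plan is to reduce the fixed-time claim to Theorem 1 by identifying the two regimes $V\ge 1$ and $0<V<1$ with the two branches of the piecewise bound (\ref{time-triggered}), taking the threshold there to be $a=1$. Intuitively, the $\Psi$-coupling (governed by $\nu$) is responsible for the phase $t\le t^{*}$ and drives $V$ from an arbitrarily large value down to the level $V=1$ in a time that is uniform in the initial data, after which the $\Phi$-coupling (governed by $\mu$) takes over for $t\ge t^{*}$ and drives $V$ from $1$ to $0$ in finite time. By the Remark following Theorem 1, it is enough to establish a bound of the form (\ref{time-triggered}) and then read off the settling time as $\omega_{1}+\omega_{2}$.

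First I would treat the phase $t\le t^{*}$, in which $V(t)\ge 1$. Repeating verbatim the Lyapunov computation of Theorem \ref{bx} --- differentiating (\ref{lya}), bounding the $f$-term by (\ref{quad}), and symmetrizing the coupling term using the semi-norm estimate --- but with the data $(\Phi,\mu,\alpha,\bar{a}_{\mu,A})$ replaced throughout by $(\Psi,\nu,\beta,\bar{b}_{\nu,B})$, yields
\begin{align*}
\dot{V}(t)\le -\Big(\frac{\beta\,\bar{b}_{\nu,B}}{2}-2\delta\frac{V(t)}{\nu(V(t))}\Big)\nu(V(t)).
\end{align*}
At this point the decisive ingredient is the \emph{monotone decreasing} property of $V/\nu(V)$: on $V\ge 1$ it gives $V(t)/\nu(V(t))\le 1/\nu(1)$, so the bracketed coefficient may be replaced by its worst-case constant value and $\dot{V}\le-\mu_{2}(V)$ with $\mu_{2}(V)=\big(\tfrac{\beta\bar{b}_{\nu,B}}{2}-2\delta/\nu(1)\big)\nu(V)$.

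Symmetrically, on the phase $t\ge t^{*}$ where $0<V(t)<1$ the computation is exactly that of Theorem \ref{bx}, and the \emph{monotone increasing} property of $V/\mu(V)$ gives $V(t)/\mu(V(t))\le 1/\mu(1)$, hence $\dot{V}\le-\mu_{1}(V)$ with $\mu_{1}(V)=\big(\tfrac{\alpha\bar{a}_{\mu,A}}{2}-2\delta/\mu(1)\big)\mu(V)$. The two estimates are precisely (\ref{time-triggered}) with $a=1$, and the associated integrals evaluate to $\omega_{1}=\big(\tfrac{\alpha\bar{a}_{\mu,A}}{2}-2\delta/\mu(1)\big)^{-1}\int_{0}^{1}\tfrac{dV}{\mu(V)}$ and $\omega_{2}=\big(\tfrac{\beta\bar{b}_{\nu,B}}{2}-2\delta/\nu(1)\big)^{-1}\int_{1}^{\infty}\tfrac{dV}{\nu(V)}$; Theorem 1 then forces $V\equiv 0$ for $t\ge\omega_{1}+\omega_{2}$, which is the bound (\ref{settling2}).

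The step I expect to be the genuine obstacle is not the algebra, which is inherited wholesale from Theorem \ref{bx}, but the correct handling of the switch. The monotonicity bounds are valid only when $V\ge 1$ throughout the $\nu$-phase and $V<1$ throughout the $\mu$-phase, and this is guaranteed only if the crossover $t^{*}$ is understood as the (initial-data-dependent) instant at which $V$ first reaches the threshold $1$, rather than as an externally fixed time. One must therefore read (\ref{model1}) through the event-triggered/time-triggered equivalence of the Remark after Theorem 1: the essential content of fixed-time stability is that, although $t^{*}$ itself depends on $V(0)$, it is uniformly bounded by $\omega_{2}$, so that the total settling time $\omega_{1}+\omega_{2}$ is independent of the initial condition. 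Making this switching argument airtight --- in particular ruling out that $V$ re-enters the region $V>1$ once it has crossed below the threshold --- is the delicate point; the convergence of the two defining integrals is guaranteed by hypothesis and poses no difficulty.
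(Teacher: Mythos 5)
Your proposal is correct and follows essentially the same route as the paper's own proof: the paper likewise splits the evolution at the threshold $V=1$, uses the monotone decreasing property of $V/\nu(V)$ to bound the $\Psi$-phase and conclude $V(t^{*})\le 1$ with $t^{*}=\bigl(\tfrac{\beta\bar{b}_{\nu,B}}{2}-\tfrac{2\delta}{\nu(1)}\bigr)^{-1}\int_{1}^{\infty}\tfrac{dV}{\nu(V)}$, and then invokes Theorem \ref{bx} (your $\mu_{1}$-bound) on the remaining phase, summing the two times to obtain (\ref{settling2}). The switching subtlety you flag (ruling out re-entry into $\{V>1\}$, which follows from $\dot V<0$ at $V=1$ under either coupling when the bracketed coefficients are positive) is passed over silently in the paper, so your treatment is, if anything, the more explicit one.
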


\begin{proof}
We will first assume $V(0)>1$ and $x_{i}(t)$, $i=1,\cdots,N,$ satisfies
\begin{align}
\dot{x}_i(t)=
f(x_i(t))+\beta\sum_{j=1}^Nb_{ij}\Psi(x_j,x_i).
\end{align}

In this case, differentiating the Lyapunov function, we have
\begin{align}
\dot{V}(t)=V_{1}(t)+V_{3}(t),
\end{align}
where $V_{1}(t)$ is same as in Theorem 5, and
\begin{align}\label{v3}
V_3(t)=&\beta\sum_{i=1}^N(x_i(t)-x^{\star}(t))^T
\sum_{j=1}^Nb_{ij}\Psi(x_j(t),x_i(t))
\nonumber\\\le & -\frac{\beta \bar{b}_{\nu,B}}{2} \nu(\sum_{k=1}^n\sum_{i,j=1}^N|x_i^k(t)-x_j^k(t)|^{2})
\end{align}
Then
\begin{align}
\dot{V}(t) &\le -\frac{\beta \bar{b}_{\nu,B}}{2}\nu(V(t))+2\delta V(t) \le
-(\frac{\beta
\bar{b}_{\nu,B}}{2}-2\delta\frac{V(t)}{\nu(V(t))})\nu(V(t))\label{newin1}
\end{align}
and $V(t^{*})\le 1$, where
\begin{align}
t^{*}= \bigg(\frac{\beta \bar{b}_{\nu,B}}{2}-2\delta
\frac{1}{\nu(1)}\bigg)^{-1}\int_{1}^{\infty} \frac{dV}{\nu(V)}
\end{align}

Combining with Theorem \ref{bx}, one can get that the fixed-time
synchronization is finally realized, and the settling time is also given as
(\ref{settling2}).
\end{proof}

\subsection{Finite-time and fixed-time consensus}
It is clear that in case $f(\cdot)=0$, $n=1$, the finite-time and fixed-time synchronization problem becomes the finite-time and fixed-time consensus problem. As special examples of previous section, we consider following nonlinear consensus models
\begin{align}\label{consensus}
\dot{x}_i(t)= \sum_{j=1}^Na_{ij}\Phi(x_j,x_i),
\end{align}
and
\begin{align}\label{consensus1}
\dot{x}_i(t)=\left\{
\begin{array}{ll}
\sum_{j=1}^Na_{ij}\Phi(x_j,x_i)&t\ge
t^{*}\\
\sum_{j=1}^Nb_{ij}\Psi(x_j,x_i); &t\le t^{*}
\end{array}\right.
\end{align}

\begin{theorem}
The  system  (\ref{consensus}) reaches finite-time consensus, i.e., $x_{i}(t)=x_{j}(t)$ for all $i,j=1,\cdots,N$ and
\begin{align}
t\ge \bar{a}_{\mu,A}^{-1}\int_{0}^{V(0)}\frac{dV}{\mu(V)}
\end{align}\end{theorem}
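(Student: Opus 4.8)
The plan is to specialize the synchronization argument of Theorem~\ref{bx} to the consensus setting, where the individual dynamics vanish and the coupling is governed solely by $\Phi$. First I would set $f(\cdot)=0$ and $n=1$ in the general model, so that the quadratic bound \eqref{quad} contributes nothing, i.e. the term $V_1(t)$ disappears entirely. This is the key simplification: with $\delta$ effectively zero, the troublesome term $2\delta\,V(t)/\mu(V(t))$ in the differential inequality of Theorem~\ref{bx} drops out, and I am left with a clean estimate driven only by the coupling term $V_2(t)$.

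Next I would differentiate the Lyapunov function \eqref{lya} along the consensus dynamics \eqref{consensus}. Following exactly the algebraic manipulation carried out for $V_2(t)$ in \eqref{v2} --- using the symmetry and irreducibility of $A$, the semi-norm property of $\mu$, and the defining constant $\bar{a}_{\mu,A}$ --- I expect to obtain
\begin{align*}
\dot{V}(t)\le -\bar{a}_{\mu,A}\,\mu(V(t)).
\end{align*}
Note that here the coupling strength $\alpha$ has been absorbed (it does not appear in the stated settling time), so in the consensus model the single constant $\bar{a}_{\mu,A}$ carries all the dependence on $\mu$ and $A$.

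Finally I would invoke Theorem~1 (the finite-time stability criterion) with $\mu_{\text{eff}}(V)=\bar{a}_{\mu,A}\,\mu(V)$. Since
\begin{align*}
\int_{0}^{V(0)}\frac{dV}{\bar{a}_{\mu,A}\,\mu(V)}=\bar{a}_{\mu,A}^{-1}\int_{0}^{V(0)}\frac{dV}{\mu(V)},
\end{align*}
Theorem~1 guarantees $V(t)\equiv 0$ --- hence $x_i(t)=x_j(t)$ for all $i,j$ --- once $t$ exceeds this integral, which is precisely the claimed bound. The convergence is genuinely finite-time provided this integral is finite, which is an implicit assumption on the behavior of $\mu$ near the origin inherited from the running hypotheses on $\mu$.

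The step I expect to be the main obstacle is verifying the lower bound that produces $\bar{a}_{\mu,A}$, namely passing from $\sum_{i,j}a_{ij}\,\mu(|x_i-x_j|^2)$ to $\bar{a}_{\mu,A}\,\mu\bigl(\sum_{i,j}|x_i-x_j|^2\bigr)$. This is where the semi-norm property and strong connectivity of $A$ must be combined: one uses irreducibility to chain differences along paths in the graph and the sub-additivity $\mu(V+U)\le c_\mu(\mu(U)+\mu(V))$ to control $\mu$ of a sum by a constant times the sum of the $\mu$'s. Since the excerpt already asserts the existence of $\bar{a}_{\mu,A}$ as a standing fact, I would simply cite that estimate rather than re-derive it, making the remainder of the proof a direct transcription of the Theorem~\ref{bx} computation with $\delta=0$.
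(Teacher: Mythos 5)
Your overall strategy is exactly the intended one: the paper gives no separate proof of this theorem at all, presenting it as the direct specialization of Theorem~\ref{bx} to $f\equiv 0$ (so that $\delta=0$ in \eqref{quad}) and $n=1$, followed by an appeal to the finite-time criterion of Proposition~1. Setting $\delta=0$ indeed removes $V_1(t)$ and makes the coupling-strength condition of Theorem~\ref{bx} vacuous, and citing the standing inequality that defines $\bar a_{\mu,A}$ rather than re-deriving it is appropriate.

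The one step that does not hold as you state it is the differential inequality $\dot V(t)\le-\bar a_{\mu,A}\,\mu(V(t))$. Transcribing \eqref{v2} for the consensus system \eqref{consensus} (which simply has coupling gain $\alpha=1$) gives
\begin{align*}
\dot V(t)\le-\frac{\bar a_{\mu,A}}{2}\,\mu(V(t)),
\end{align*}
and the factor $\frac{1}{2}$ cannot be ``absorbed together with $\alpha$'': it does not come from the gain but from the symmetrization step
$\sum_i(x_i-x^{\star})\sum_j a_{ij}\phi(x_j,x_i)=\frac{1}{2}\sum_{i,j}a_{ij}(x_i-x_j)\phi(x_j,x_i)$,
which is present regardless of $\alpha$. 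Consequently what your argument actually delivers is the settling-time bound $2\,\bar a_{\mu,A}^{-1}\int_{0}^{V(0)}\frac{dV}{\mu(V)}$, i.e.\ twice the bound claimed in the statement. This factor-of-two tension is in fact internal to the paper: Theorem~\ref{bx} evaluated at $\alpha=1$, $\delta=0$ yields $\left(\frac{\bar a_{\mu,A}}{2}\right)^{-1}\int_{0}^{V(0)}\frac{dV}{\mu(V)}$, not $\bar a_{\mu,A}^{-1}\int_{0}^{V(0)}\frac{dV}{\mu(V)}$, so the stated consensus theorem is inconsistent with the paper's own synchronization theorem by the same factor. A careful version of your proof should therefore either carry the factor $2$ in the settling time, or explicitly redefine $\bar a_{\mu,A}$ so that it absorbs the $\frac{1}{2}$; the justification you give for dropping it is not valid as written.
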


\begin{theorem}\label{main2}
Denote
\begin{align}
t^{*}= \bigg(\frac{\beta \bar{b}_{\nu,B}}{2} \bigg)^{-1}\int_{1}^{\infty}
\frac{dV}{\nu(V)}
\end{align}
The  system  (\ref{consensus1}) reaches fixed-time consensus for all 
\begin{align}
t\ge t^{*}+\bar{a}_{\mu,A}^{-1}\int_{0}^{V(0)}\frac{dV}{\mu(V)}
\end{align}

\end{theorem}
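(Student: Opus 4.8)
The plan is to treat Theorem \ref{main2} as the $f\equiv 0$ (hence $\delta=0$), $n=1$ specialization of the fixed-time synchronization result Theorem \ref{main}, and to run the same two-phase argument split at the switching time $t^{*}$. Because $f\equiv 0$ annihilates the term $V_{1}(t)$ in the Lyapunov derivative, every estimate becomes strictly cleaner than in Theorem \ref{main}: no $2\delta\,V/\nu(V)$ or $2\delta\,V/\mu(V)$ correction has to be absorbed, so the monotonicity hypotheses on $V/\nu(V)$ and $V/\mu(V)$ enter only through the convergence of the respective integrals.

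For the pre-switch phase $t\le t^{*}$, differentiating the consensus Lyapunov function \eqref{lya} along \eqref{consensus1} and invoking the $\Psi$-coupling estimate \eqref{v3} (with the $V_1$ term now absent) gives
\begin{align*}
\dot V(t)\le -\frac{\beta\,\bar b_{\nu,B}}{2}\,\nu(V(t)).
\end{align*}
Passing to the inverse-time picture of Proposition 1, the time needed to drive $V$ from $V(0)$ down to the level $1$ is at most
\begin{align*}
\Big(\tfrac{\beta\,\bar b_{\nu,B}}{2}\Big)^{-1}\int_{1}^{V(0)}\frac{dV}{\nu(V)}
\le \Big(\tfrac{\beta\,\bar b_{\nu,B}}{2}\Big)^{-1}\int_{1}^{\infty}\frac{dV}{\nu(V)}=t^{*}.
\end{align*}
The finiteness of $\int_{1}^{\infty}\nu(V)^{-1}\,dV$, guaranteed because $V/\nu(V)$ is decreasing, is exactly what makes this bound uniform in $V(0)$. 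A comparison argument (the Dini differential inequality forces $V$ to stay below the solution of the corresponding equality) then yields $V(t^{*})\le 1$ no matter how large $V(0)$ is; the case $V(0)\le 1$ is immediate since $V$ is nonincreasing.

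For the post-switch phase $t\ge t^{*}$ the dynamics switch to the $\Phi$-coupling, and the preceding finite-time consensus theorem applies with initial value $V(t^{*})\le 1$. It gives $V(t)\equiv 0$ once $t-t^{*}\ge \bar a_{\mu,A}^{-1}\int_{0}^{V(t^{*})}\mu(V)^{-1}\,dV$, and since $V(t^{*})\le 1$ this increment is bounded by $\bar a_{\mu,A}^{-1}\int_{0}^{1}\mu(V)^{-1}\,dV$, finite because $V/\mu(V)$ is increasing. Summing the two phase bounds gives the settling time $t^{*}+\bar a_{\mu,A}^{-1}\int_{0}^{1}\mu(V)^{-1}\,dV$.

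I expect the one genuine step to nail down to be the uniform reduction $V(t^{*})\le 1$, since this is precisely what turns an initial-condition-dependent estimate into the fixed-time conclusion; it rests on the comparison lemma together with the finiteness of the tail integral defining $t^{*}$. I would also flag that the upper limit of the final integral in the statement should read $1$ rather than $V(0)$: writing $V(0)$ reintroduces exactly the initial-value dependence that fixed-time stability is meant to eliminate, whereas it is $V(t^{*})\le 1$ that actually enters the bound.
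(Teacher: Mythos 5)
Your proposal is correct and follows exactly the route the paper intends: the paper gives no separate proof of Theorem \ref{main2} at all, presenting it (together with the preceding finite-time consensus theorem) as the $f\equiv 0$ (hence $\delta=0$), $n=1$ specialization of the synchronization results, which is precisely your two-phase comparison argument split at $t^{*}$. Your editorial flag is also well taken: the paper's own Theorem \ref{main} uses $\int_{0}^{1}dV/\mu(V)$ in the settling-time bound, so the $\int_{0}^{V(0)}$ appearing in Theorem \ref{main2} is best read as a slip; as you note, the bound as printed is still a valid settling time (since $V(t^{*})\le\min(V(0),1)$ it is simply loose for $V(0)>1$), but it reintroduces the initial-value dependence that the fixed-time claim is supposed to remove.

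One side claim in your write-up is false, although it does not damage the main argument: monotonicity of $V/\nu(V)$ does \emph{not} guarantee $\int_{1}^{\infty}dV/\nu(V)<\infty$. Take $\nu(V)=V\log(1+V)$: then $V/\nu(V)=1/\log(1+V)$ is strictly decreasing, yet the tail integral diverges (like an iterated logarithm). Symmetrically, $\mu(V)=V\bigl(1+\log(1/V)\bigr)$ has $V/\mu(V)$ increasing on $(0,1)$ while $\int_{0}^{1}dV/\mu(V)=\infty$. In this paper the finiteness of these integrals is an implicit standing hypothesis (exactly as in Theorem 1, where $\omega_{1},\omega_{2}<\infty$ is assumed), not a consequence of the monotonicity conditions; the monotonicity of $V/\mu(V)$ and $V/\nu(V)$ is only needed in Theorems \ref{bx} and \ref{main} to absorb the $2\delta V$ term, and with $\delta=0$ it plays no role. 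Your proof should therefore simply cite finiteness of the two integrals as an assumption (without which $t^{*}=\infty$ and the statement is vacuous) rather than attempt to derive it.
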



\section{Conclusion}\label{con}
In this letter, by regarding finite-time stability as an inverse problem, we reveal the essence of finite-time stability and fixed-time stability. Some necessary and sufficient conditions are given. As application, we give a new approach for finite-time and fixed-time synchronization and consensus and some new results are given, too. As direct consequences, many existing results can be derived by the general approach.


\noindent{\bf\Large References}
\begin{description}
\item
Dorato, P. (1961).
Short time stability in linear time-varying systems
Proc. IRE Int. Convention Record Part 4, 83--87.

\item
Roxin, E. (1966).
On Finite Stability in Control Systems, {\it SIAM}, 49(9), 1520--1533.

\item
Haimo, V. T., (1986). Finite time controllers, {\it SIAM J. Control Optim.}, 24(4), 760-770.

\item
Bhat S.~P., and Bernstein, D.~S., (1998) Continuous finite-time stabilization of the translational and rotational double integrators, \emph{IEEE Trans. Autom. Control}, vol. 43, no. 5, pp. 678-682, May.
\item
Bhat S.~P., and Bernstein, D.~S., (2000) Finite-time stability of continuous autonomous systems, \emph{SIAM J. Control Optim.}, vol. 38, no. 3, pp. 751-766, Mar.
\item
Hong, Y.~G., Xu, Y.~S., and Huang, J., (2002) Finite-time control for robot manipulators, \emph{Syst. Control Lett.}, vol. 46, no. 4, pp. 243-253, Jul.

\item
Lu, W.~L., and  Chen, T.~P., (2004) ``Synchronization analysis of linearly coupled networks of discrete time systems,'' \emph{Physica D}, vol.198, pp. 148-168.


\item
Lu, W.~L., and  Chen, T.~P., (2005). Dynamical behaviors of
 Cohen-Grossberg neural networks with discontinuous activation
 functions," Neural Networks, 18, 231-242
\item
Lu, W.~L., and  Chen, T.~P., (2006) ``New approach to synchronization analysis of linearly coupled ordinary differential systems,'' \emph{Phys. D, Nonlinear Phenomena}, vol. 213, no. 2, pp. 214-230.

\item
Moulay, E., Perruquetti, W., (2006). Finite time stability and stabilization of a class of continuous systems, J. Math. Anal. Appl. 323, 1430¨C1443

\item
Cort\'{e}s, J., (2006). Finite-time convergent gradient flows with applications to network consensus, \emph{Automatica}, vol. 42, no. 11, pp. 1993-2000, Nov.

\item
Chen., T., and Wang., L., (2007).
"Power-Rate Global Stability of Dynamical Systems With Unbounded
Time-Varying Delays", IEEE Transactions on Circuits and Systems¡ªII:
Express Briefs, 54(8), 705-709

\item
Shen, Y.~J.,  and Xia, X.~H., (2008). Semi-global finite-time observers for nonlinear systems, \emph{Automatica}, vol. 44, no. 12, pp. 3152-3156, Dec.

\item
Shen, Y.~J., and  Huang, Y.~H., (2009) Uniformly observable and globally lipschitzian nonlinear systems admit global finite-time observers, \emph{IEEE Trans. Autom. Control}, vol. 54, no. 11, pp. 2621-2625, Nov.

\item
Xiao, F.,  Wang, L.,  Chen, J., and Y. P. Gao, (2009). Finite-time formation control
for multi-agent systems, Automatica, vol. 45, no. 11, pp. 2605-2611, Nov.

\item
 Jiang, F. C., and  Wang, L., (2009) Finite-time information consensus for multiagent
systems with fixed and switching topologies, Physica D, vol. 238,
no. 16, pp. 1550-1560, Aug.

\item
Wang X. L., and  Hong, Y. G., (2010). Distributed finite-time $\xi$-consensus algorithms for multi-agent systems with variable coupling topology, J. Syst. Sci. Complex, vol. 23, no. 2, pp. 209-218.

\item
Wang, L., and  Xiao, F., (2010). Finite-time consensus problem for network of dynamic agents, IEEE Trans. Autom. Control, vol. 55, no. 4, pp. 950-955.
\item
Jiang F. C., and  Wang, L., (2011).  Finite-time weighted average consensus with respect to a monotonic function and its application, Syst. Control Lett., vol. 60, no. 9, pp. 718-725.

\item
Sun, F. L.,  Chen, J. C.,  Guan, Z. H.,  Ding, L., and  Li, T., (2012) Leader following
finite-time consensus for multi-agent systems with jointly reachable
leader, Nonlinear Anal.-Real, vol. 13, no. 5, pp. 2271-2284.
\item
 Polyakov, A., (2012) Nonlinear feedback design for fixed-time stabilization of
linear control systems, IEEE Trans. Autom. Control, vol. 57, no. 8, pp.
2106-2110.

\item
 Parsegov, S.,  Polyakov, A. and  Shcherbakov, P., (2012) Nonlinear fixed-time
control protocol for uniform allocation of agents on a segment, in Proc.
51st IEEE Conf. on Decision and Control, Maui, Hawaii, USA, Dec.
2012, pp. 7732-7737.

\item
Parsegov, S.,  Polyakov, A. and  Shcherbakov, P., (2013) Fixed-time consensus
algorithm for multi-agent systems with integrator dynamics, in Proc. 4th
IFAC Workshop Distrbuted Estimation and Control in Networked System,
Koblenz, Germany, pp. 110-115.

\item
Zhao, Y.,  Duan, Z. S., and  Wen, G. H., (2015) Finite-time consensus for second order
multi-agent systems with saturated control protocols, IET Control
Theory Appl., vol. 9, no. 3, pp. 312-319.

\item
Liu, X. Y.,  Ho, D. W. C.,  Yu, W. W., and  Cao, J. D., (2014) A new switching
design to finite-time stabilization of nonlinear systems with applications
to neural networks, Neural Networks, vol. 57, pp. 94-102.

\item
Liu, X. Y., Lam, J.,  Yu, W. W. and  Chen, G.,  Finite-time consensus of
multiagent systems with a switching protocol, IEEE Trans. Neural Netw.
Learn. Syst., DOI: 10.1109/TNNLS.2015.2425933.

\item
Wang, X. Y.,  Li, S. H. and  Shi, P., (2014) Distributed Finite-time containment
control for double-integrator multiagent systems, IEEE Trans. Cybern.,
vol. 44, no. 9, pp. 1518-1528.

\item
Zuo Z. Y., and  Tie, L., (2014) A new class of finite-time nonlinear consensus
protocols for multi-agent systems, Int. J. Control, vol. 87, no. 2, pp. 363-370.

\item
Zuo Z. Y., and  Tie, L., Distributed robust finite-time nonlinear
consensus protocols for multi-agent systems, Int. J. Syst. Sci., doi:
10.1080/00207721.2014.925608.

\item
Polyakov, A.,  Efimov, D. and  Perruquetti, W., (2015) Finite-time and fixed-time stabilization: Implicit Lyapunov function approach, Automatica, vol. 51, pp. 332-340.

\item
Zuo Z. Y., (2015) Non-singular fixed-time terminal sliding mode control of
non-linear systems, IET Control Theory and Applications, vol. 9, no. 4,
pp. 545-552.

\item

Meng, D. Y.,  Jia, Y. M. and  Du, J. P., Finite-time consensus for
multiagent systems with cooperative and antagonistic interactions, IEEE
Trans. Neural Netw. Learn. Syst., DOI: 10.1109/TNNLS.2015.2424225

\end{description}

\end{document}